\newtheorem{definition}{Definition}[section] 
\newtheorem{theorem}{Theorem}[section] \newtheorem{lemma}{Lemma}[section]
\begin{document}


\title{Kernel Method based on Non-Linear Coherent State}

\author{Prayag Tiwari $^{1}$ }

\email{prayag.tiwari@dei.unipd.it}

\author{Shahram Dehdashti $^{2}$}
\email{shahram.dehdashti@qut.edu.au}

\author{Abdul Karim Obeid $^{2}$}
\author{Massimo Melucci $^{1}$}
\author{Peter Bruza $^{2}$}
\email{p.bruza@qut.edu.au}
\affiliation{$^{1}$ Department of Information Engineering, University of Padova, Padova, 35131, Italy.}
\affiliation{ $^{2}$School of Information Systems, Queensland University of Technology, Brisbane 4000, Australia.}

\thanks{The first two authors contributed equally in this research.}





\date{\today}

\begin{abstract}
In this paper, by mapping datasets to a set of non-linear coherent states, the process of encoding inputs in  quantum states as a  non-linear feature map is re-interpreted.  
As a result of this fact that  the  Radial Basis Function is recovered when data is mapped to a complex Hilbert state represented by coherent states,  
non-linear coherent states can be considered as natural generalisation of  associated kernels. By considering the non-linear coherent states of a quantum oscillator with variable mass, we propose a kernel function based on generalized hypergeometric functions, as  orthogonal polynomial functions. The suggested kernel  is implemented with support vector machine on two well known datasets (make\_circles, and make\_moons) and outperforms the baselines, even in the presence of high noise. In addition, we study impact of geometrical properties of feature space, obtaining by non-linear coherent states, on the SVM classification task, by using considering the Fubini-Study metric of associated coherent states. 
\end{abstract}

                              
\maketitle


\section{Introduction}\label{sec:level1}

Quantum machine learning is a rapidly growing field of investigation. It can be argued that developments are being driven from two directions. Firstly, quantum computers offer the promise of massive improvement in the speed of computational processing \cite{aaronson2013bosonsampling,tillmann2013experimental,brod2019photonic}. Secondly, the mathematical framework of quantum mechanics is increasingly been seen as a suitable framework for designing algorithms that aren't constrained by Boolean algebra and logic \cite{pitowsky1994george,vourdas2019probabilistic}.The reasons that support the latter claim are many, e.g., the linearity of the Schr\"{o}dinger equation \cite{schrodinger1987schrodinger,tsutsumi1987schrodinger}, which leads to the definition of superposed states in complex Hilbert spaces with an ‘interference’ term affecting probabilities. 
Consider modelling the dependence that measurement outcomes have on the preparation of states: duly named ‘contextual scenarios’ \cite{kochen1975problem}. There is also the novelty of correlations observed through entanglement, discord, {\it etc} in quantum mechanics that go beyond classically correlated structures \cite{zurek2000einselection,girolami2014quantum}, as well as quasi-distributions, which occur in phase space, so-called Wigner distributions \cite{simon1987gaussian,lorce2011quark}, and achieve negativity - this is not possible in Kolmogorovian probability theory. All of the preceding introduce the potential for access to an information space greater than that of classical alternatives \cite{goh2018geometry,pusey2019contextuality,dehdashti2020irrationality,uprety2020quantum}. This is encouraging for scientists wishing to apply quantum formalism within machine learning (ML) as a generalisation of probability theory.\\
\indent In ML, kernel methods \cite{shawe2004kernel,zelenko2003kernel,soentpiet1999advances,hofmann2008kernel,evgeniou2005learning,campbell2002kernel} are a class of categorization algorithms. Used within a wide range of methods and algorithms, they include  the support vector machine (SVM) \cite{amari1999improving,wang2005support,noble2006support}, kernel operators with principal components analysis (PCA) \cite{scholkopf1997kernel}, spectral clustering \cite{dhillon2004kernel}, canonical correlation analysis \cite{akaho2006kernel}, linear adaptive filters \cite{liu2009information}, and ridge regression \cite{an2007face}. Indeed, kernel methods are proving to assist also in deep neural networks, for which there are many recently published works \cite{cho2009kernel,belkin2018understand}. There exist vast prospects of kernel methods in ML due to the non-linear nature of the underlying data. Within application settings, the data are usually non-separable, for which the requirement of the kernel then becomes transformations (of the data) into higher dimensions where it may be (linearly) separable as can be seen in Figure \ref{fig:kernelpic}.
 One of the well-known
kernel functions in ML, the Radial Basis function (RBF),
is defined by $K(x,x^{\prime}) = \exp\left[-|x-x^{\prime}|^{2}/2\sigma^{2}\right]$
where $x$ and $x^{\prime}$ are two sample elements, and $\sigma$  controls the decision boundary \cite{musavi1992training,buhmann2000radial,orr1996introduction}. It is worthwhile to mention that the RBF  can be understood as an inner product of the linear  coherent state, see \cite{kubler2019quantum}, which is defined as an eigenstate of the  annihilation operator of a harmonic oscillator. 
The idea of using a quantum mechanics formalism in kernel methods was suggested by Schuld and Killoran, who introduced squeezed kernels in feature spaces \cite{schuld2019quantum}.  In fact, they defined  the feature space as a set of squeezed states so that  the kernel is obtained by inner products of squeezed states \cite{datko1970extending,gleason1957measures}.\\
\indent 
 In this paper, we express non-linear coherent states \cite{de1996nonlinear,man1997f,mancini1997even,roy2000new,sivakumar2000studies} as a quantum feature space, such that kernel functions are defined as their inner products. 
We show that the mathematical structure of non-linear coherent states provides infinite kernels. As an example of non-linear feature space, we investigate coherent states constructed by wave-functions of a quantum oscillator with variable mass. Generalized hypergeometric functions, as orthogonal polynomials, are identified as provide the associated kernel. Our proposed KMNCS has been demonstrated in an SVM classifier, along with the RBF and squeezed kernel as a baseline on two-well-known datasets (make\_moons, make\_circles). KMNCS is shown to outperform the baselines (squeezed and RBF kernels) even as we increase the noise in the dataset (which increases difficulty of generalisation). In addition, we study the geometrical properties of the feature space, by obtaining the Fubini-Study metric of non-linear coherent states. We show that the feature space of a non-linear coherent state of a oscillator with variable mass is a surface with negative curvature, which opens a new line of investigation of how the feature space's curvature affects the accuracy SVM classification. \\
\indent The rest of the paper is organised as follows: in Section \ref{sec2}, we briefly review the Kernel Method. Section \ref{sec3} introduces the previously mentioned coherent states. Section \ref{sec4} provides an overview on non-linear coherent states and introduces the coherent states of a quantum oscillator with variable mass. This allows the main contribution of the paper to be defined: a kernel function based on generalized hypergeometric functions. Section \ref{sec5} details the experimental design which allows the proposed kernel function to be empirically evaluated against two baseline: RBF and squeezed kernel. Section \ref{sec6} discusses the results of the empirical evaluation. Also, this section is examines the geometrical properties of a non-linear coherent state.  Finally, Section \ref{sec7} concludes the article.

\begin{figure}[t]
    \centering
    \includegraphics[width=8 cm, height=6 cm]{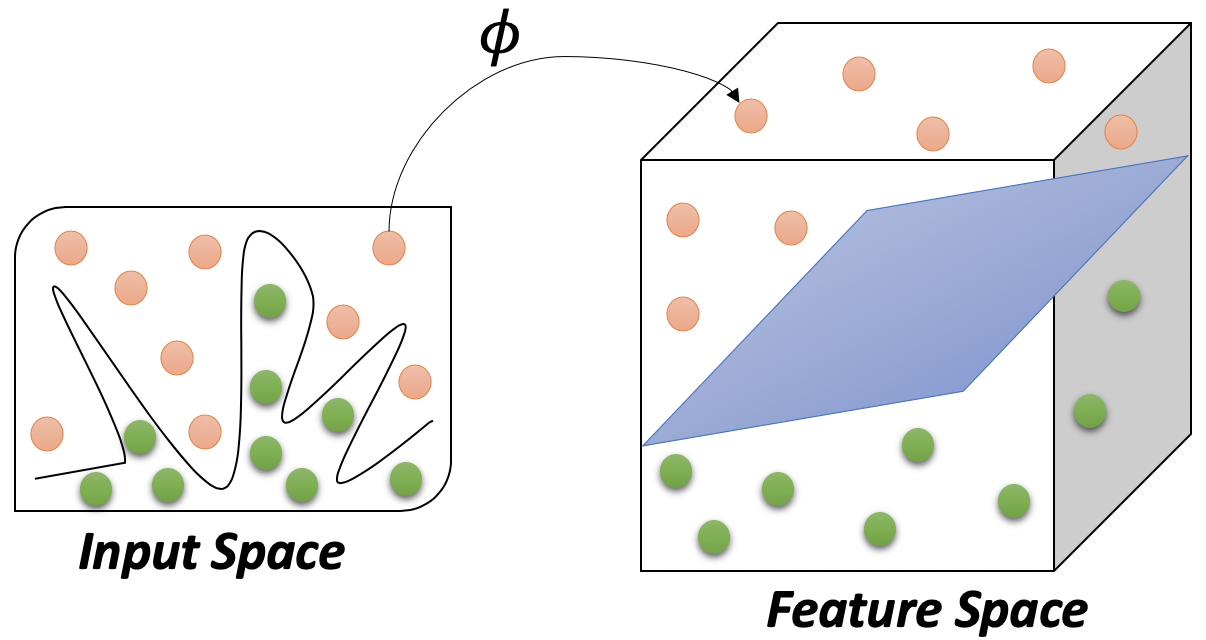}
    \caption{Kernel visualization from low dimension input space to high dimension feature space.}
    \label{fig:kernelpic}
\end{figure}

\section{Kernel Method}\label{sec2}
Traditional ML begins with a dataset of inputs $\mathcal{D} = \{x_{1},\cdots ,x_{M}\}$ drawn from a set $\mathcal{X}$.
The goal is to produce a predictive model that allows patterns to be discovered in yet to be observed data.
Kernel methods use the inner product $K(x, x^{\prime})$ between any given two inputs $x, x^{\prime} \in \mathcal{X}$, as a distance measure to build predictive models that assist with representing characteristics of a data distribution.  
\begin{definition}\label{de11}
 Let $\mathcal{X}$ be a nonempty set, 
called the index set,  and $\mathcal{H}$ by a Hilbert space of functions $\phi : \mathcal{X}\rightarrow \mathbb{R}$. Then $\mathcal{H}$ is called
a reproducing kernel Hilbert space endowed with the dot product, 
$ \bra{\cdot} \ket{\cdot} $,  if there exists a function $k : \mathcal{X}\times \mathcal{X \rightarrow \mathcal{R}}$ with the  two following properties: (i) $k$ has the reproducing property, i.e., $\bra{\phi}\ket{k(x,\cdot)}=\phi(x)$ for all $\phi \in \mathcal{H}$; (ii) $k$ spans $\mathcal{H}$ \cite{scholkopf2001learning}.

\end{definition}
 Note that in particular $\bra{k(x,\cdot)}\ket{k(x^{\prime},\cdot)}=k(x,x^{\prime})=k(x^{\prime},x)$ guarantees symmetry of the arguments of $k$. 
\begin{theorem}
\label{theorem1}
Let $\phi: \mathcal{X}\rightarrow \mathcal{F}$ be a feature map. Kernel function can be defined as the inner product of two inputs mapped to some feature space by, $\mathbf{k}(x, x^{\prime} ) =
 \langle\phi(x), \phi(x^{\prime}) \rangle$. 
\end{theorem}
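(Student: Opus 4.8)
The plan is to show that the inner-product formula is consistent with the reproducing-kernel structure of Definition \ref{de11}, by exhibiting an explicit feature map for which the asserted equality holds. The natural candidate is the \emph{canonical} feature map $\phi : \mathcal{X} \to \mathcal{H}$ defined by $\phi(x) = k(x,\cdot)$, which sends each input to the corresponding kernel section inside the reproducing kernel Hilbert space. Because each $k(x,\cdot)$ is by construction an element of $\mathcal{H}$, this map is well defined and takes values in the feature space, so we may take $\mathcal{F} = \mathcal{H}$.

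First I would evaluate the inner product of two feature vectors directly. Writing $\langle \phi(x), \phi(x') \rangle = \langle k(x,\cdot), k(x',\cdot) \rangle$ and invoking the reproducing property of Definition \ref{de11}(i) — which reads $\langle \psi, k(y,\cdot) \rangle = \psi(y)$ for all $\psi \in \mathcal{H}$ — with the test function $\psi = k(x,\cdot)$ and evaluation point $y = x'$, one obtains $\langle k(x,\cdot), k(x',\cdot) \rangle = [k(x,\cdot)](x') = k(x,x')$. This is exactly the identity in the statement, so the canonical feature map realises the reproducing kernel as an inner product, and the symmetry already recorded in the remark following Definition \ref{de11} is consistent with this.

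For the converse reading — that \emph{any} map of the form $k(x,x') = \langle \phi(x), \phi(x') \rangle$ genuinely defines an admissible kernel — I would verify the two structural properties a kernel must satisfy. Symmetry is immediate from the symmetry of the real inner product. Positive semi-definiteness follows from the observation that for any finite collection $x_1,\dots,x_n \in \mathcal{X}$ and real coefficients $c_1,\dots,c_n$,
\begin{equation}
\sum_{i,j} c_i c_j \, \langle \phi(x_i), \phi(x_j) \rangle = \Big\| \sum_{i} c_i \, \phi(x_i) \Big\|^2 \geq 0,
\end{equation}
so the associated Gram matrix is positive semi-definite, which is precisely the admissibility condition for a kernel.

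I expect the only genuine subtlety to lie not in either computation above, but in the existence question sitting behind the statement: guaranteeing that every symmetric positive semi-definite $k$ actually arises from some feature map into a Hilbert space. This is the content of the Moore--Aronszajn theorem, which builds $\mathcal{H}$ as the completion of the span of the sections $\{ k(x,\cdot) : x \in \mathcal{X} \}$ under the inner product forced by the reproducing property. Since Definition \ref{de11} already posits the existence of the reproducing kernel Hilbert space, I would treat this completion as given and lean on it, rather than reconstruct the argument from scratch.
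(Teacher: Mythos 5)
Your proposal is correct. Note that the paper does not actually supply a proof of Theorem \ref{theorem1}; it simply defers to the cited references. The argument you give — the canonical feature map $\phi(x) = k(x,\cdot)$ combined with the reproducing property to obtain $\langle k(x,\cdot), k(x',\cdot)\rangle = k(x,x')$, plus the symmetry and positive semi-definiteness check $\sum_{i,j} c_i c_j \langle \phi(x_i),\phi(x_j)\rangle = \| \sum_i c_i \phi(x_i) \|^2 \geq 0$ for the converse direction, with Moore--Aronszajn handling existence — is precisely the standard treatment found in those references, so you have in effect reconstructed the outsourced proof rather than diverged from it. One small point of care: the statement as written in the paper is really only the forward assertion (an inner product of feature vectors \emph{defines} a kernel), so your converse paragraph is the part doing the work the theorem actually claims, while the canonical-feature-map computation shows the construction is non-vacuous and recovers every reproducing kernel; it is worth keeping both halves clearly separated as you have done.
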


\begin{proof}
The proof can be found in references  
\cite{hofmann2008kernel,kubler2019quantum}.
\end{proof}

\begin{theorem}
\label{theorem2}
Consider a feature mapping is $\phi: \mathcal{X}\rightarrow \mathcal{F}$ over some input set $\mathcal{X}$, which provide the basis to a complex kernel $\mathbf{k}(x, x^{\prime} ) =
 \langle\phi(x), \phi(x^{\prime}) \rangle$ that is defined on $\mathcal{F}$. The associated reproducing kernel Hilbert space can be written as $\mathcal{R}_{\mathbf{k}} = \{f:\mathcal{X}\rightarrow \mathbb{C}\}$   
\end{theorem}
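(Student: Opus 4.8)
The plan is to carry out a complex-valued version of the Moore--Aronszajn construction, building the Hilbert space of functions directly from the kernel $\mathbf{k}$ and then verifying that it satisfies the two defining properties of Definition~\ref{de11}. The starting observation is that, since $\mathbf{k}(x,x') = \langle\phi(x),\phi(x')\rangle$ is an inner product of feature vectors (Theorem~\ref{theorem1}), it is automatically Hermitian, $\mathbf{k}(x,x') = \overline{\mathbf{k}(x',x)}$, and positive definite: for any finite collection $x_1,\dots,x_n\in\mathcal{X}$ and coefficients $\alpha_1,\dots,\alpha_n\in\mathbb{C}$,
\begin{equation}
\sum_{i,j=1}^{n}\overline{\alpha_i}\,\alpha_j\,\mathbf{k}(x_i,x_j) = \Big\|\sum_{i=1}^{n}\alpha_i\,\phi(x_i)\Big\|^{2}\ge 0 .
\end{equation}

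First I would set $k_x := \mathbf{k}(x,\cdot)$, a function $\mathcal{X}\to\mathbb{C}$ for each $x\in\mathcal{X}$, and let $\mathcal{R}_0 = \operatorname{span}_{\mathbb{C}}\{k_x : x\in\mathcal{X}\}$ be the space of finite linear combinations. On $\mathcal{R}_0$ I would define the sesquilinear form
\begin{equation}
\Big\langle \sum_i \alpha_i\, k_{x_i},\; \sum_j \beta_j\, k_{y_j}\Big\rangle = \sum_{i,j}\overline{\alpha_i}\,\beta_j\,\mathbf{k}(x_i,y_j),
\end{equation}
which, thanks to Hermiticity and the positivity displayed above, is a genuine inner product. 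The reproducing property is then immediate on $\mathcal{R}_0$, since $\langle f, k_x\rangle = \sum_i \overline{\alpha_i}\,\mathbf{k}(x_i,x)$ reproduces the value of $f$ at $x$ (up to the conjugation convention fixed in Definition~\ref{de11}), giving property~(i).

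Finally I would complete $\mathcal{R}_0$ in the induced norm to obtain $\mathcal{R}_{\mathbf{k}}$. The key estimate is the Cauchy--Schwarz bound $|f(x)| = |\langle f, k_x\rangle|\le \|f\|\sqrt{\mathbf{k}(x,x)}$, which shows that every evaluation functional is bounded; consequently each Cauchy sequence converges pointwise, and the abstract completion can be realised concretely as a space of complex-valued functions, $\mathcal{R}_{\mathbf{k}} = \{f:\mathcal{X}\to\mathbb{C}\}$. Extending the inner product and the reproducing property from $\mathcal{R}_0$ to the completion by continuity, and noting that $\{k_x\}$ spans a dense subspace, secures property~(ii) and closes the argument.

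The hard part will be the two consistency checks buried in the construction: verifying that the sesquilinear form does not depend on the non-unique representation of an element of $\mathcal{R}_0$, and confirming that the completion embeds as actual functions on $\mathcal{X}$ rather than as abstract equivalence classes of Cauchy sequences. Both hinge on the boundedness of point evaluations, so establishing the Cauchy--Schwarz estimate early is precisely what renders the remaining steps routine.
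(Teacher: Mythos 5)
The paper does not actually prove this statement: its ``proof'' is a pointer to \cite{hofmann2008kernel} and to an online lecture note, so you are supplying an argument where the paper supplies none. What you give is the standard complex Moore--Aronszajn construction, which is indeed the content of the cited references, and it is sound: Hermitian symmetry and positive semidefiniteness follow from the inner-product representation, the sesquilinear form on $\operatorname{span}\{k_x\}$ is well defined once point evaluations are controlled, and the Cauchy--Schwarz bound $|f(x)|\le\|f\|\sqrt{\mathbf{k}(x,x)}$ lets the completion be realised as genuine functions. You are also right to single out the two consistency checks (well-definedness on non-unique representations, and function-space realisation of the completion) as the only places where care is needed. The one loose end relative to the statement as written: the theorem characterises $\mathcal{R}_{\mathbf{k}}$ as the set of functions $f(x)=\langle w,\phi(x)\rangle$ with $w\in\mathcal{F}$, whereas you build $\mathcal{R}_{\mathbf{k}}$ as the closure of $\operatorname{span}\{\mathbf{k}(x,\cdot)\}$. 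To close the gap you should add the identification $k_x=\langle\phi(x),\phi(\cdot)\rangle=f_{\phi(x)}$, so that finite spans of kernel sections correspond to $w$ in the span of the feature vectors, the completion corresponds to $w$ in its closed span, and any $w\in\mathcal{F}$ outside that closed span yields the same function as its orthogonal projection onto it. This is a short extra step, not a flaw in the construction itself.
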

\begin{equation}
    f(x) = \langle w, \phi(x^{\prime})\rangle,
\end{equation}
for all $w \in \mathcal{F}$ and $x \in \mathcal{X} $.
\begin{proof}
The proof can be found in  \cite{hofmann2008kernel} and \footnote{\url{http://www.gatsby.ucl.ac.uk/~gretton/coursefiles/lecture4_introToRKHS.pdf}}. 
\end{proof}

\section{Coherent State}\label{sec3}
A harmonic oscillator in quantum physics is described by the following Hamiltonian:
\begin{eqnarray}
H=\hbar \omega\left(\hat{n}+\frac{1}{2}\right)
\end{eqnarray}
in which $\hbar$ is the Planck's constant  and $\omega$ is the  angular frequency of the oscillator; the number operator  $\hat{n}$ is described by annihilation and creation operators, i.e., $\hat{n}=\hat{a}^{\dagger}\hat{a}$. The Sch\"orodinger equation gives the discrete energy  eigenvalue:
\begin{eqnarray}
H\ket{n}=E_{n}\ket{n}, \ n=0,1,2, \cdots  
\end{eqnarray}
in which eigenvalue $E_{n}=\hbar\omega(n+1/2)$ is associated by eigenstate $\ket{n}$. For simplicity, we consider $\hbar=1$ and $\omega=1$ in the rest of the paper. \\ 
\indent A coherent state is the specific quantum state of the quantum harmonic oscillator, often described as a state which has dynamics most closely resembling the oscillatory behavior of a classical harmonic oscillator, for example see \cite{ali2000coherent}. 
\begin{definition}
A coherent state is defined as superposition of number state  $\ket{n}$ as following:
\begin{eqnarray}\label{eq4}
\ket{\alpha}=e^{-|\alpha|^{2}/2}\sum_{n=0}^{\infty}\frac{\alpha^{n}}{\sqrt{n!}}\ket{n}
\end{eqnarray}
which number states satisfy  $\bra{n}\ket{m}=\delta_{n,m}$, where $\delta_{n,m}$ is Kronecker delta.
\end{definition}
Note that the inner product of two coherent states is given by:
\begin{eqnarray}\label{eq2}
\bra{\alpha}\ket{\beta}=e^{|\alpha-\beta|^{2}/2}
\end{eqnarray}
\begin{lemma} A harmonic oscillator coherent state adheres the following: 
\begin{enumerate}\label{lem1}
     \item It is obtained by operation of the displacement operator $D(\mathbf{\alpha})=\exp{\left(\alpha^{\ast}a^{\dagger}-\alpha a\right)}$ on a reference state: $|\alpha\rangle=D(\alpha)|0\rangle$, in which $\ket{\alpha}$ is given by the relation (\ref{eq4}).
  \item  It is an eigenvector of the annihilation operator, $a|\alpha\rangle=\alpha|\alpha\rangle$.
  \item It fulfills  the minimum uncertainty relation, i.e., $\Delta(\mathbf{x})=\Delta(\mathbf{p})=1/\sqrt{2}$, where $\Delta (\mathbf{x})=\sqrt{\langle \mathbf{x}^{2}\rangle-\langle \mathbf{x}\rangle^{2}}$, where $\langle \mathbf{x}\rangle=\bra{\alpha}\mathbf{x}\ket{\alpha}$. 
  \item It is over-complete, $K(\alpha,\alpha^{\prime})=\bra{\alpha}\ket{\alpha^{\prime}}\neq \delta(\alpha-\alpha^{\prime})$, i.e., the equation (\ref{eq2}), despite the fact that they fulfil the resolution of the identity, $\int d\mu(\mathbf{x}) |\mathbf{x}\rangle\langle \mathbf{x}|=\mathbb{I}$, which leads to the following relation:
  \begin{eqnarray}
 \int d\mu(\mathbf{x}) \langle\phi|\mathbf{x}\rangle\langle \mathbf{x}|\psi\rangle=\langle\phi|\psi\rangle. 
  \end{eqnarray}
\end{enumerate}
\end{lemma}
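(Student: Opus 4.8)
The plan is to verify the four properties almost independently, but I would begin with the eigenvalue relation (item 2) since it is the quickest consequence of the definition (\ref{eq4}) and it feeds directly into the uncertainty computation. Acting with $a$ on $\ket{\alpha}$ and using $a\ket{n}=\sqrt{n}\ket{n-1}$, the $n=0$ term drops out and the reindexing $m=n-1$ collapses the sum back into $\ket{\alpha}$ with an extra factor of $\alpha$, giving $a\ket{\alpha}=\alpha\ket{\alpha}$. Taking the adjoint yields $\bra{\alpha}a^{\dagger}=\alpha^{\ast}\bra{\alpha}$, which I will reuse repeatedly.

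For item 1 I would establish $D(\alpha)\ket{0}=\ket{\alpha}$ by disentangling the displacement operator into normal-ordered form. Since $[a,a^{\dagger}]=1$ is a c-number, the commutator $[\alpha^{\ast}a^{\dagger},-\alpha a]$ is also a scalar, so the strong Baker--Campbell--Hausdorff identity $e^{A+B}=e^{A}e^{B}e^{-[A,B]/2}$ applies and factorizes $D(\alpha)=e^{-|\alpha|^{2}/2}\,e^{\alpha^{\ast}a^{\dagger}}\,e^{-\alpha a}$. Acting on the vacuum, $e^{-\alpha a}\ket{0}=\ket{0}$ because $a\ket{0}=0$, while expanding $e^{\alpha^{\ast}a^{\dagger}}$ as a power series and using $(a^{\dagger})^{n}\ket{0}=\sqrt{n!}\ket{n}$ reproduces the number-state expansion of the coherent state up to the overall Gaussian prefactor.

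For item 3 I would write the quadratures as $\mathbf{x}=(a+a^{\dagger})/\sqrt{2}$ and $\mathbf{p}=(a-a^{\dagger})/(i\sqrt{2})$ and evaluate the first and second moments on $\ket{\alpha}$ using item 2 and its adjoint. The first moments give $\langle\mathbf{x}\rangle\propto\mathrm{Re}\,\alpha$, and in computing $\langle\mathbf{x}^{2}\rangle$ the only term not fixed by the eigenvalue relation is the normal-ordering correction from $aa^{\dagger}=a^{\dagger}a+1$; that $+1$ is exactly what survives in $\langle\mathbf{x}^{2}\rangle-\langle\mathbf{x}\rangle^{2}$, yielding $\Delta(\mathbf{x})=1/\sqrt{2}$, and an identical calculation gives $\Delta(\mathbf{p})=1/\sqrt{2}$, so the product saturates the Heisenberg bound.

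Finally, item 4 has two parts. The over-completeness is immediate: equation (\ref{eq2}) shows the overlap is a Gaussian in $\alpha-\beta$ rather than a delta function, so distinct coherent states are never orthogonal. The resolution of the identity is the one genuinely non-trivial computation, and I expect it to be the main obstacle because it is where the correct measure $d\mu(\alpha)=\pi^{-1}d^{2}\alpha$ must be pinned down. I would insert the expansion (\ref{eq4}) into $\int d^{2}\alpha\,\ket{\alpha}\bra{\alpha}$, switch to polar coordinates $\alpha=re^{i\theta}$, and do the angular integral first: $\int_{0}^{2\pi}e^{i(n-m)\theta}\,d\theta=2\pi\delta_{n,m}$ kills all off-diagonal terms, after which the radial integral reduces via $u=r^{2}$ to the Gamma integral $\int_{0}^{\infty}u^{n}e^{-u}\,du=n!$. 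The factorials cancel against the $1/n!$ from the overlap, leaving $\pi\sum_{n}\ket{n}\bra{n}=\pi\,\mathbb{I}$, hence $\pi^{-1}\int d^{2}\alpha\,\ket{\alpha}\bra{\alpha}=\mathbb{I}$; sandwiching this between $\bra{\phi}$ and $\ket{\psi}$ then gives the stated relation.
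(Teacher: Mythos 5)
Your proposal is correct, but it is not comparable to the paper's own ``proof,'' because the paper does not actually prove Lemma~\ref{lem1}: it simply cites \cite{ali2000coherent,kubler2019quantum}. What you have written is the standard textbook derivation, carried out in full: the eigenvalue property by reindexing the number-state sum, the displacement property via the degenerate Baker--Campbell--Hausdorff factorization, the minimal uncertainty from the quadrature moments with the single $[a,a^{\dagger}]=1$ normal-ordering correction, and the resolution of the identity by the polar-coordinate/Gamma-function integral with measure $d^{2}\alpha/\pi$. All four arguments are sound, and your ordering (proving item~2 first and reusing it for item~3) is efficient. Two minor caveats, both inherited from the paper rather than introduced by you: first, with the paper's convention $D(\alpha)=\exp(\alpha^{\ast}a^{\dagger}-\alpha a)$, your BCH factorization actually yields $e^{-|\alpha|^{2}/2}\sum_{n}(\alpha^{\ast})^{n}\ket{n}/\sqrt{n!}$, i.e.\ the state labelled $\alpha^{\ast}$ in the sense of relation (\ref{eq4}); the standard convention $D(\alpha)=\exp(\alpha a^{\dagger}-\alpha^{\ast}a)$ is what makes item~1 come out exactly as stated. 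Second, relation (\ref{eq2}) as printed, $\bra{\alpha}\ket{\beta}=e^{|\alpha-\beta|^{2}/2}$, has the wrong sign in the exponent and omits a phase; the correct overlap satisfies $|\bra{\alpha}\ket{\beta}|^{2}=e^{-|\alpha-\beta|^{2}}$, which is what your non-orthogonality argument in item~4 actually relies on. Neither point undermines your proof strategy, but both are worth flagging if your derivation is to be spliced against the paper's stated formulas.
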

\begin{proof}
The proofs can be found in references  \cite{ali2000coherent,kubler2019quantum}.
\end{proof}

The latter, i.e., item  $4$ of Lemma \ref{lem1}, implies an arbitrary function can be expressible as a linear combination of kernel functions in a ``reproducing Hilbert space"  \cite{combescure2012coherent}. 
We should mention that the first above-mentioned property leads to define a displacement-type coherent states   \cite{ali2000coherent}, for  generalized annihilation and creation operators. Moreover  a Gazeau-Klauder coherent state is defined by the second property and fulfils the third property  \cite{ali2000coherent}. It should be noted that while the latter,  i.e., resolution of the identity, fulfils all types of coherent states. 

Recently Schuld and Killoran published a paper in which data is mapped into a feature space established by squeezed  states \cite{schuld2019quantum}. Squeezed states are  states that saturate the Heisenberg uncertainty principle; additionally, the quadrature variance of position and momentum depend on a parameter,  so-called squeezing parameter. The squeezing parameter causes the uncertainty to be squeezed  one of its quadrature components, while  stretched  uncertainty for the other component, i.e. $\Delta(\mathbf{x})=\exp{2\zeta}/\sqrt{2}$ and $\Delta(\mathbf{p})=\exp{-2\zeta}/\sqrt{2}$, where $\zeta$ is the  squeezing parameter. Therefore, the squeezing parameter controls uncertainty via a quadrature component, while the third and fourth properties of coherent states are preserved.
 However, as we mentioned,  one of the well-known kernel functions in ML, the Radial Basis function (RBF), is defined by  
\begin{equation}\label{eq3}
K(\mathbf{x},\mathbf{x}^{\prime};\sigma)=\exp{\left(-||\mathbf{x}-\mathbf{x}^{\prime}||^{2}/2\sigma^{2}\right)},    
\end{equation}
 where $\mathbf{x}$ and $\mathbf{x}^{\prime}$ are two sample elements, and $\sigma$ is understood as a free parameter. 
 However, drawing a comparison between relations (\ref{eq2}) and (\ref{eq3}), inspires  someone to interpret the RBF as an inner product of  two coherent states. This interpretation opens the door to define new kernels and consequently improve the kernel method \cite{kubler2019quantum}.

\section{Non-linear Coherent State}\label{sec4}
 As mentioned before,  some efforts have been devoted to study possible generalization of the quantum harmonic oscillator algebra \cite{man1997f}. A deformed algebra is a non-trivial generalization of a given algebra through the introduction of one or more deformation parameters, such that, in a certain limit of the parameters, the non-deformed algebra can be recovered. 
 
 A particular deformation of the W-H algebra led to the notion of $f$-deformed oscillator \cite{man1997f}. An $f$-deformed oscillator is a non-harmonic system where its dynamical variables (creation and annihilation operators) are constructed from a non-canonical transformation through 
      \begin{eqnarray}
        &&
        \hat{A}=\hat{a}f(\hat{n})=f(\hat{n}+1)\hat{a},\label{eqd2}\\
        &&
        \hat{A}^{\dag}=f^{\dag}(\hat{n})\hat{a}^{\dag}=\hat{a}^{\dag}f^{\dag}(\hat{n}+1).\label{eqd3}
      \end{eqnarray}
      where $f(\hat{n})$ is called deformation function by which non-linear properties of this system are governed. 
 An $f$-deformed oscillator is characterized by a Hamiltonian of the harmonic oscillator form,
 \begin{eqnarray}\label{3}
 \hat{H}=\frac{\omega}{2}( \hat{A} \hat{A}^{\dag}+ \hat{A}^{\dag} \hat{A}),
  \end{eqnarray}
    where  $ \hat{A}$ and $  \hat{A}^{\dag} $ are given in equations (\ref{eqd2})  and (\ref{eqd3}). In this Hamiltonian, $ \omega $ is frequency of harmonic oscillator and $ \hslash=m=1 $.  The deformed operators  satisfy the following commutation relation
    \begin{equation}\label{4}
      [\hat{A},\hat{A}^{\dag}]=(\hat{n}+1)f^{2}(\hat{n}+1)
        -\hat{n}f^{2}(\hat{n}).
    \end{equation}
    Relations (\ref{eqd2}) and (\ref{eqd3})  give eigenvalues of the Hamiltonian (\ref{3}) as follows: 
    \begin{equation}\label{5}
    E_{n}=\frac{\omega}{2}((n+1)f^{2}(n+1)
        +n f^{2}(n)).
    \end{equation}
    It is worth to mention that  by approaching deformation function into $1$, i.e., $f(n) \rightarrow 1$,  the non-deformation energy eigenvalues, $E_{n}=\omega(n+\frac{1}{2})$, and the non-deformed commutation relation $[\hat{a},\hat{a}^{\dag} ]=1$ are recovered. However, similar to the harmonic oscillator, it is possible to construct coherent states for the $f$-deformed oscillator. The non-linear transformation of the creation and annihilation operators leads naturally to the notion of non-linear coherent states or $f$-coherent states \cite{dehdashti2015decoherence,dehdashti2013decoherence,dehdashti2013coherent,dehdashti2015realization}.
\begin{definition}
Non-linear coherent states are defined as the right-hand eigenstates of the deformed annihilation operator $\hat{A}$ as follows \cite{ali2000coherent}:
\begin{equation}\label{9}
\hat{A}|\alpha\rangle_{f}=\alpha|\alpha\rangle_{f}.       
\end{equation}
From equation (\ref{9}) one can obtain an explicit form of the non-linear coherent states in the number state representation as,
\begin{equation}\label{10}
|\alpha\rangle_{f}=\frac{1}{\mathcal{N}} \sum_{n}^{M}  \frac{\alpha^{n}}{\sqrt{n! [f(n)]!}} |n \rangle,            
\end{equation}
 where $M$ can be finite, or infinite (corresponding to finite or infinite dimensional Hilbert space), $\alpha$ is a complex number and  $[f(n)]!=\prod_{i=0}^{n}f(i)$, with $[f(0)]!=1$; the normalization factor $\mathcal{N}$ is given by
     \begin{eqnarray}\label{11}
  \mathcal{N}=\left(\sum_{n=0}^{M} \frac{|\alpha|^{2n}}{n! [f(n)]!}\right)^{-1/2}. 
\end{eqnarray}
\end{definition}
  Therefore,  based on the definition \ref{de11}, we can define a kernel as the following: 
\begin{definition}
By  mapping   multi-dimensional input set  $\mathbf{x} = (x_{1},\cdots,x_{N})^{T} \in \mathbb{R}^{N}$ into  non-linear coherent states, defined by the relation (\ref{10}), so that they are  fulfilled with resolution of the identity,   a feature space is defined as
 \begin{equation*}
  \phi: (x_{1},\cdots, x_{N}) \rightarrow  |x_{1}\rangle_{f}\otimes |x_{2}\rangle_{f} \otimes \cdots \otimes |x_{N}\rangle_{f}.
 \end{equation*}
  In addition, the  associated kernel is obtained by the inner product as the following:
 \begin{eqnarray}\label{eq9}
 K(\mathbf{x},\mathbf{x^{\prime}})= \prod_{i=1}^{N}\  _{f}\langle x_{i} | 
 x_{i}^{\prime} \rangle_{f},
 \end{eqnarray}
 in which
 \begin{eqnarray}
_{f}\langle x_{i} | 
 x_{i}^{\prime} \rangle_{f}=\frac{1}{\mathcal{N}^{2}}\sum_{n=0}^{M}\frac{(x_{i}x_{i}^{\prime})^{m}}{m![f(m)]!}.
 \end{eqnarray}

\end{definition}
 \subsection{Non-Linear coherent state of an oscillator with variable mass}
 \begin{figure}[t]
    \centering
    \includegraphics[width=3.5 cm, height=4.5 cm]{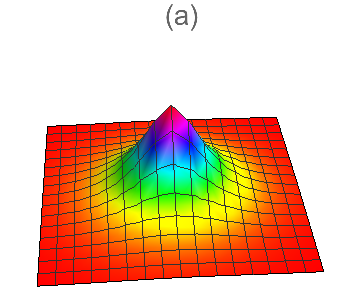}
    \includegraphics[width=3.5 cm, height=4.5 cm]{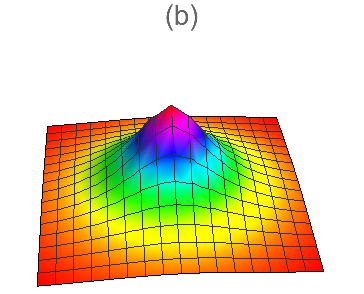}
    \caption{Schematic shape of the  kernel function (\ref{eq8}) for $k=-0.001$ and $k=-0.5$ in plots (a) and (b), respectively. The input $x$ is fixed at $(0, 0)$ and $x^{\prime}$ is varied.}
    \label{fig1}
\end{figure}
  The quantum version of a non-linear oscillator Hamiltonian  with variable mass is given by \cite{tchoffo2019supersymmetric}
 \begin{equation}
     H=\frac{1}{2\lambda}\left[-(1+\delta^{2}x^{2})\frac{d^{2}}{dx^{2}}-2\delta^{2}x\frac{d}{dx}+\frac{\lambda^{2}x^{2}}{1+\delta^{2}x^{2}}\right]
 \end{equation}
 in which where $\lambda$ is a real parameter and $\delta$ is constant that measures the force of the nonlinearity of the oscillator.
 By using  ladder operators, one can define a non-linear coherent state as the following:
 \begin{eqnarray}\label{eq1}
 \ket{x}=\frac{1}{\mathcal{N}(x)}
 \sum_{n=0}^{\infty}\frac{x^{n}}{\rho_{n}}\ket{n}
 \end{eqnarray}
 in which $\rho_{n}=n! (-k)^{n} (2-1/k)_{n}$, which $(u)_{n}=u(u-1)\cdots (u+n-1)$ represents the Pochhammer symbol, $k=\delta^{2}/2\lambda$ and normalization factor $\mathcal{N}(x)$ is given by 
 \begin{eqnarray}
 \mathcal{N}^{2}(x)&=&\sum_{n=0}^{\infty}\left(\frac{(1/k)^{n}}{n!(2-1/k)_{n}}\right)^{2}|x|^{2n}\nonumber\\
 &=& _{0}F_{3}(;1,2-1/k,2-1/k;|x/k|^{2})
 \end{eqnarray}
 which $_{0}F_{3}(;1,2-1/k,2-1/k;|x/k|^{2})$ is a generalized hyper-geometric function.\\
 \indent By considering a multi-dimensional input set in a data set of vectors $\mathbf{x} = (x_{1},\cdots,x_{N})^{T} \in \mathbb{R}^{N}$, one can define the joint state of $N$ deformed coherent states,
 \begin{equation*}
  \phi: (x_{1},\cdots, x_{N}) \rightarrow  |x_{1},k\rangle\otimes |x_{2},k\rangle \otimes \cdots \otimes |x_{N},k\rangle. 
 \end{equation*}
 Therefore, the kernel  is defined as the following:
 \begin{eqnarray}\label{eq19}
 K(\mathbf{x},\mathbf{x^{\prime}})= \prod_{i=1}^{N}\langle x_{i};k | 
 x_{i}^{\prime};k \rangle,
 \end{eqnarray}
 in which
\begin{eqnarray}\label{eq8}
\langle x_{i};k | 
 x_{i}^{\prime};k \rangle= \frac{_{0}F_{3}(;1,2-1/k,2-1/k;x_{i}x_{i}^{\prime}/k^{2})}{ \mathcal{N}(x) \mathcal{N}(x^{\prime})}
\end{eqnarray}
Figure \ref{fig1} schematically illustrates the kernel function (\ref{eq8}), for $k=-0.001$ and $k=-0.05$.
\subsection{Geometrical properties of associated Hilbert Space constructed by Non-linear coherent state}
For understanding the rule of nonlinear parameter $k$, we will study the geometrical properties of above-mentioned feature spaces.
We can define the line element of the feature space by
using the Fubini-Study metric \cite{bengtsson2017geometry}.
\begin{definition} A suitable metric between two Hilbert space vectors, e.g, $\ket{\psi}$ and $\ket{\phi}$, is defined by  
\begin{eqnarray}
d(\ket{\psi},\ket{\phi})=\min || \ket{\psi}-e^{i\alpha}\ket{\phi}||, \ 0\leq \alpha \leq 2\pi.
\end{eqnarray}
The infinitesimal form of this metric is given by the Fubini-Study metric:
\begin{eqnarray}\label{eq25}
ds^{2}= ||d\ket{x}||^{2}-||\bra{x}d\ket{x}||^{2}.
\end{eqnarray}
\end{definition}
The following gives the Fubini-Study metric of a non-Linear coherent state (\ref{eq1}).
\begin{theorem}
 The Fubini-Study metric of a non-linear coherent state (\ref{eq1}) is a surface with non-zero curvature with the following metric:
\begin{eqnarray}\label{eq26}
ds^{2}=\Omega(r)\left(dr^{2}+r^{2}d\phi^{2}\right),
\end{eqnarray}
in which 
\begin{eqnarray}
\Omega(r)=\frac{1}{2}\left[\frac{\partial^{2}_{r}\mathcal{N}(r)}{\mathcal{N}(r)}+\frac{\partial_{r}\mathcal{N}(r)}{r\mathcal{N}(r)}-
\left(\frac{\partial_{r}\mathcal{N}(r)}{\mathcal{N}(r)}\right)^{2}\right].
\end{eqnarray}
with $\partial_{r}\equiv \frac{\partial}{\partial r}$.
\end{theorem}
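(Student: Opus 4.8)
The plan is to introduce polar coordinates $x=re^{i\phi}$ for the complex parameter and to evaluate the Fubini--Study line element (\ref{eq25}) directly in the frame $(r,\phi)$. Writing $c_{n}=1/\rho_{n}$ (which are real for real $k$), the state (\ref{eq1}) reads $\ket{x}=\mathcal{N}^{-1}\sum_{n}c_{n}r^{n}e^{in\phi}\ket{n}$ with $\mathcal{N}^{2}(r)=\sum_{n}c_{n}^{2}r^{2n}$, so that $\mathcal{N}$ depends on $r$ alone. First I would compute the two partial derivatives $\partial_{r}\ket{x}=-(\mathcal{N}'/\mathcal{N})\ket{x}+(r\mathcal{N})^{-1}\sum_{n}n\,c_{n}x^{n}\ket{n}$ and $\partial_{\phi}\ket{x}=i\,\mathcal{N}^{-1}\sum_{n}n\,c_{n}x^{n}\ket{n}$, since every ingredient of (\ref{eq25}) is built from these.

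The key observation is that all of the inner products reduce to the first two moments of the number operator $\hat{n}$ in the state $\ket{x}$. Applying the Euler operator $r\partial_{r}$ to $\mathcal{N}^{2}$ generates factors $2n$ inside the sum, which gives at once $\langle \hat{n}\rangle=\mathcal{N}^{-2}\sum_{n}n\,c_{n}^{2}r^{2n}=r\mathcal{N}'/\mathcal{N}$ and, applying $r\partial_{r}$ twice, $\langle \hat{n}^{2}\rangle=\mathcal{N}^{-2}\sum_{n}n^{2}c_{n}^{2}r^{2n}=\tfrac{1}{2}\big[r\mathcal{N}'/\mathcal{N}+r^{2}(\mathcal{N}'/\mathcal{N})^{2}+r^{2}\mathcal{N}''/\mathcal{N}\big]$. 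These two identities are the only analytic input; everything else is bookkeeping.

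Next I would assemble (\ref{eq25}). The normalisation $\langle x|x\rangle=1$ forces $\langle x|\partial_{r}x\rangle=0$ and $\langle x|\partial_{\phi}x\rangle=i\langle\hat{n}\rangle$, so the subtracted piece $|\langle x|dx\rangle|^{2}$ contributes only to $d\phi^{2}$, namely $\langle\hat{n}\rangle^{2}d\phi^{2}$. For the first term $\lVert d\ket{x}\rVert^{2}$ I would evaluate $\langle\partial_{\phi}x|\partial_{\phi}x\rangle=\langle\hat{n}^{2}\rangle$ and $\langle\partial_{r}x|\partial_{r}x\rangle=r^{-2}\langle\hat{n}^{2}\rangle-(\mathcal{N}'/\mathcal{N})^{2}$, and check that the cross term $\operatorname{Re}\langle\partial_{r}x|\partial_{\phi}x\rangle$ vanishes because that bracket is purely imaginary; this is precisely what makes the metric diagonal.

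Combining the pieces, the $d\phi^{2}$ coefficient collapses to the variance $\langle\hat{n}^{2}\rangle-\langle\hat{n}\rangle^{2}$, while the $dr^{2}$ coefficient is $r^{-2}\langle\hat{n}^{2}\rangle-(\mathcal{N}'/\mathcal{N})^{2}$; using $\langle\hat{n}\rangle=r\mathcal{N}'/\mathcal{N}$ these are seen to obey (coefficient of $d\phi^{2}$) $=r^{2}\times$ (coefficient of $dr^{2}$), which is exactly what yields the conformal form $ds^{2}=\Omega(r)(dr^{2}+r^{2}d\phi^{2})$ with $\Omega(r)=r^{-2}(\langle\hat{n}^{2}\rangle-\langle\hat{n}\rangle^{2})$. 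Substituting the moment identity for the variance and dividing by $r^{2}$ then delivers the stated $\Omega(r)$. The main obstacle is not any single computation but keeping the bookkeeping consistent---in particular verifying simultaneously the vanishing of the off-diagonal term and the precise matching of the two diagonal coefficients, since the whole conformal structure of the metric hinges on that matching.
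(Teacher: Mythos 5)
Your derivation of the metric is correct and follows essentially the same route the paper intends: substitute $x=re^{i\phi}$ into the Fubini--Study line element (\ref{eq25}) and compute directly. The paper compresses this entire step into one sentence (``directly leads to the metric''), so your proposal actually supplies the missing computation; your organisation of it through the moments $\langle\hat{n}\rangle=r\mathcal{N}'/\mathcal{N}$ and $\langle\hat{n}^{2}\rangle$ generated by the Euler operator $r\partial_{r}$ acting on $\mathcal{N}^{2}$ is clean, the vanishing of $\langle x|\partial_{r}x\rangle$ and of the off-diagonal term checks out, and the identity (coefficient of $d\phi^{2}$) $=r^{2}\times$(coefficient of $dr^{2}$) together with $\Omega(r)=r^{-2}\bigl(\langle\hat{n}^{2}\rangle-\langle\hat{n}\rangle^{2}\bigr)$ reproduces the stated conformal factor exactly. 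The one omission is that the theorem also asserts the surface has \emph{non-zero curvature}: the bulk of the paper's written proof is devoted to this, computing the Christoffel symbols and the Ricci scalar $R=-\Omega(r)^{-1}\bigl(\partial_{r}^{2}\ln\Omega+r^{-1}\partial_{r}\ln\Omega\bigr)$ (with non-vanishing then exhibited numerically for the specific $\mathcal{N}$ of the variable-mass oscillator). Your proposal does not address this half of the claim, so to match the full statement you would need to append that curvature computation, or at least note that $\Omega$ is not of the form making $R$ identically zero.
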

\begin{proof} We consider the  $x=re^{i\phi}$,  using the definition (\ref{eq25}), directly leads to the metric (\ref{eq26}).\\
By using the definition of  Christoffel symbol,
\begin{eqnarray}
\Gamma_{ab}^{c}=\frac{1}{2}g^{cd}\left(\partial_{a}g_{bd}+\partial_{b}g_{ad}-\partial_{d}g_{ab}
\right), \ a,b,c= r,\phi
\end{eqnarray}
with the Einstein summation rule, the non-zero  Christoffel symbols are given by 
\begin{eqnarray}
\Gamma_{rr}^{r}&=&\frac{1}{2}
\partial_{r} \ln \Omega(r), \\
\Gamma_{\phi \phi}^{r}&=&-\left(r+\frac{r^{2}}{2}
\partial_{r} \ln \Omega(r)\right), \\
\Gamma_{\phi r}^{\phi}&=&\frac{1}{r}+\frac{1}{2}
\partial_{r} \ln \Omega(r).
\end{eqnarray}
Hence, the non-zero Ricci tensors are given by,
\begin{eqnarray}
R_{rr}&=&-\frac{1}{2}\left(\partial_{r}^{2}\ln \Omega(r)+\frac{1}{r}\partial_{r}\ln \Omega(r)\right)\\
R_{\phi \phi}&=&-\frac{r}{2}\left(\partial_{r}\ln \Omega(r)+r\partial_{r}^{2}\ln\Omega(r)\right)
\end{eqnarray}
Hence, the Ricci scalar, $R=g^{ab}R_{ab}$, is obtained as
\begin{eqnarray}\label{eq34}
R=- \Omega(r)^{-1}\left(\partial_{r}^{2}\ln \Omega(r)+\frac{1}{r}\partial_{r}\ln \Omega(r)\right)
\end{eqnarray}
\end{proof}
 \begin{figure}[t]
    \centering
    \includegraphics[width=8 cm]{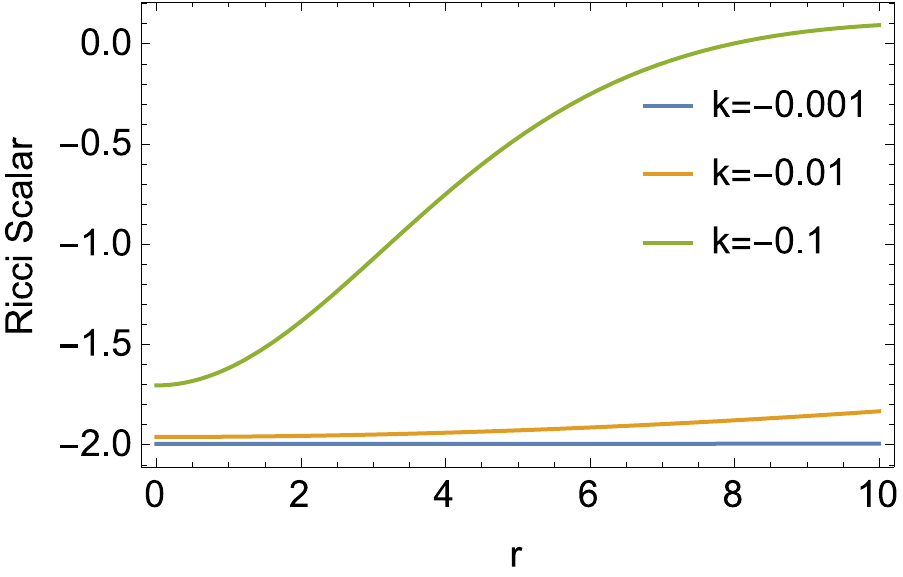}
    \caption{Ricci scalar of the Fubini-Study metric of a feature space constructed by a oscillator with variable mass   as function of $r$, for different values of $k$.}
    \label{fig3}
\end{figure}
\indent It is worthwhile to note that the metric (\ref{eq26}), which describes the feature space, is a surface that is conformal with flat space, that is   conformally preserves angles, while lengths can be changed. In fact, this metric illustrates a two-dimensional curved space, depending on conformal function $\Omega(r)$. By considering the non-linear coherent state (\ref{eq1}) and normalized coefficient (\ref{eq2}), Fig. \ref{fig3} represents the Ricci scalar for different values of $k$. This figure indicates that  the feature space is a surface with negative curvature. Decreasing the value $k$ causes the Ricci curvature to increase.  In other words, increasing value of $k$ causes the feature space and the associated kernel to approach a flat space and RBF kernel, respectively.    Despite the fact that the feature space constructed on a non-linear coherent state is a non-zero curved space, RBF, which is constructed from a linear coherent state, is a kernel on  flat space, with the zero Ricci scalar. 

\section{Experimental Design}\label{sec5}
 In the following, we empirically evaluate the  KMNCS, i.e. the relation (\ref{eq19}),  RBF kernel, i.e., the relation (\ref{eq3}), 
and the Squeezed kernel, given by
\begin{eqnarray}\label{eq35}
K_{Sq}=\prod_{i=1}\left[\frac{\sech c \sech c^{\prime} }{1-e^{i(x-x^{\prime})}\tanh c \tanh c^{\prime}}\right]^{1/2}.
\end{eqnarray}

\indent SVM uses a kernel function to define a decision boundary for separating the data points. Generally, the hyper parameter $C$ in the Gaussian kernel is used to optimize the performance of SVM, as a cost function connected with mis-classifications of data points in feature space of the training set. We kept the hyperparameter $C=1$ as the optimal value.\\ 
\indent
In order to provide a comprehensive picture of the performance of the kernels, i.e.,  kernels (\ref{eq3}), (\ref{eq35}) and (\ref{eq19}), noise was systematically applied to the input data. Applying noise implies adding it to the target variable. For example, if the noise parameter is say $0.2$, a standard deviation of $0.2$ would be observed (originating from Gaussian noise) in the output. When data points become inseparable due to noise, it becomes more challenging for the classifier to accurately classify the data points.\\
\indent 
We have used synthetic datasets (simple toy datasets) which are commonly used to check the performance of kernels. The two datasets ($make\_moons$\footnote{\url{https://scikit-learn.org/stable/modules/generated/sklearn.datasets.make_moons.html#sklearn.datasets.make_moons}} and $make\_circles$\footnote{\url{https://scikit-learn.org/stable/modules/generated/sklearn.datasets.make_circles.html#sklearn.datasets.make_circles}}) are taken from sklearn. There is some flexibility in regard to each dataset. Random noise can be introduced by adjusting different parameters for each dataset: the \textbf{Moons} dataset generates two half circles with the noise parameters affecting 'interleave', the \textbf{Circles} dataset generates concentric circles also affected by 'interleave' via the noise parameter. The decision region for class 1 is color-coded 'red', and 'blue' for class 0. We have also recorded differing values of $flip\_y$ which is an inbuilt parameter in $make\_classification$ \footnote{\url{https://scikit-learn.org/stable/modules/generated/sklearn.datasets.make_classification.html#sklearn.datasets.make_classification}} ( Generate a random n-class classification problem) also from sklearn. A large $flip\_y$ supplements the noise effects, making accurate classification even more challenging. The $factor$ and $random\_state$ in the setting have also been recorded. 
We have divided the data into  $60 \%$ training set and $40 \%$ test set. 
We have also used 5 fold cross validation during training in order to avoid overfitting problems. We have examined KMNCS using different values of $k$ in order to evaluate the classification performance and to understand how the decision boundaries are formed. 

\begin{figure}[t]
    \centering
    \includegraphics[width=9 cm, height=12 cm]{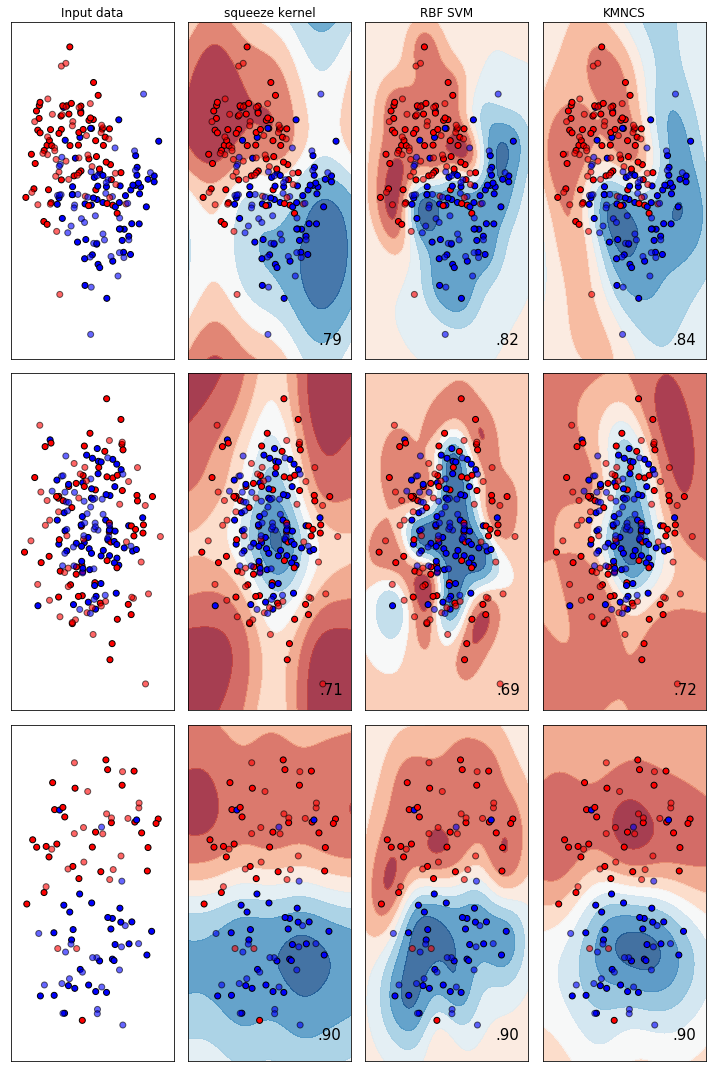}
    \caption{Kernel visualization (Squeezed, RBF kernels and KMNCS) for $k=-0.001$, while the noise is $0.4$ on both the make\_moons and make\_circles datasets.}
    \label{fig:kernel_visualization}
\end{figure}

\begin{figure}[t]
    \centering
    \includegraphics[width=9 cm, height=12 cm]{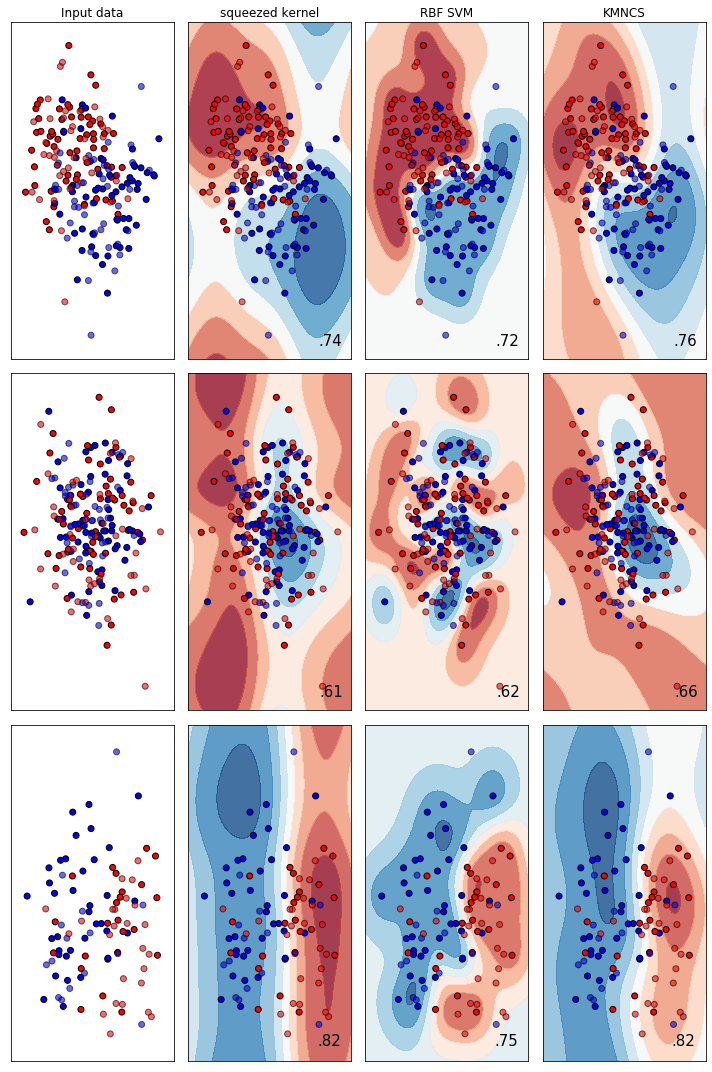}
    \caption{Kernel visualization (Squeezed, RBF, and KMNCS) for $k=-0.1$, while  noise is $0.5$ on the make\_moons dataset, and  $0.7$ on the make\_circles dataset.}
    \label{fig:kernel_visualization1}
\end{figure}

\begin{figure}[t]
    \centering
    \includegraphics[width=9 cm, height=12 cm]{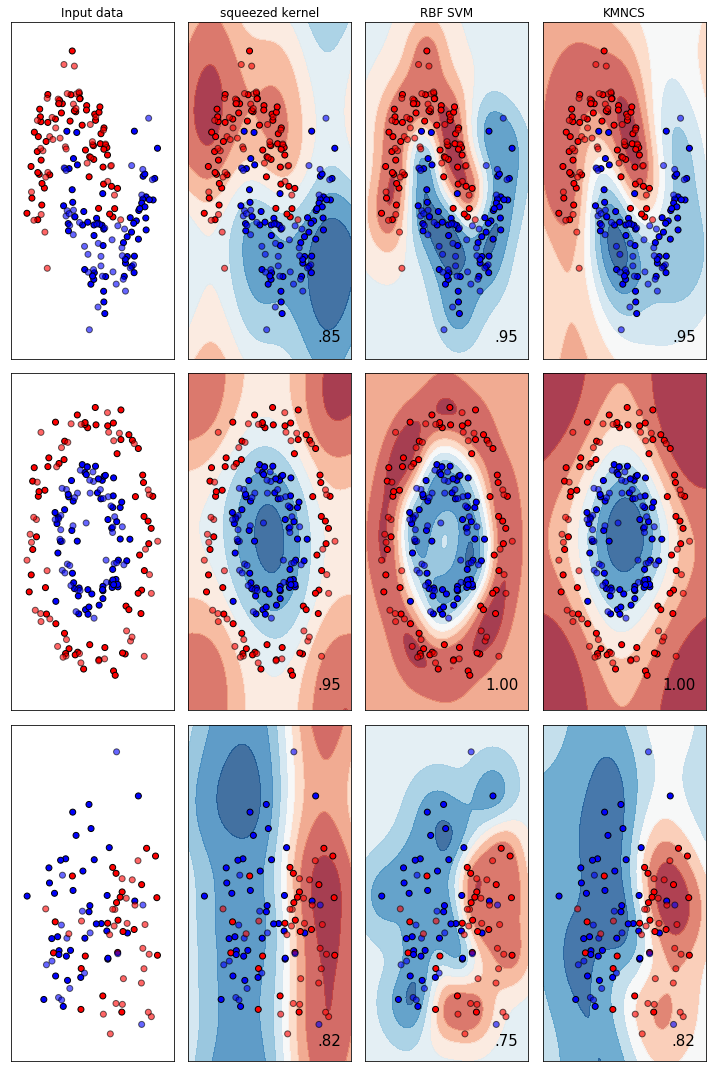}
    \caption{Kernel visualization (Squeezed, RBF, and KMNCS) for $k=-0.01$, while noise is $0.2$ on the make\_moons dataset, and $0.1$ on make\_circles dataset.}
    \label{fig:kernel_visualization2}
\end{figure}

\section{Results}
\label{sec6}
The sklearn package includes a ``fit" method that is used for informing the model by applying a training set. To compute the score by cross-validation of SVM, ``cross\_val\_score" is used also from sklearn with a ``5"-fold cross-validation. \\
\indent As we can see from Table \ref{tab:accuracy}, KMNCS has been tested on several values of $k$. If we check the value of $k=-0.001$ with very low noise, i.e. $=0.1$, all the three kernels provide almost the same classification accuracy as can be seen in Figure \ref{fig:kernel_visualization2} (Here the input data is separable and it is an easy task for the classifier to classify them). Interestingly, KMNCS has better performance than both baselines for both datasets when $k=-0.001$ and noise is $0.4$. This performance in the presence of such noise demonstrates effectiveness of KMNCS. The decision boundary is depicted in Figure \ref{fig:kernel_visualization}. 

When we increase the noise effect in the data, then classification becomes more challenging i.e., when noise is $0.5$ then KMNCS outperforms the squeezed and RBF kernel as can be seen in Figure \ref{fig:kernel_visualization1}. Note that it is possible to increase the accuracy of the classifier if we select $k=-0.01, -0.1$ with the same level of noise, that is $0.5$, as it can be seen in the Table \ref{tab:accuracy}. The KMNCS performance is deemed better than the baselines as data is inseparable (due to $50\%$ noise in the target variable). Again, where noise is  $0.7$ and $k=-0.1$, then KMNCS provides better performance than the two baseline classifiers due to clean decision boundaries where data is separable, as can be seen in Figure \ref{fig:kernel_visualization1}. 

We also tested on $3000$ samples as can be seen in Table \ref{tab:accuracy_with_3000samples}. KMNCS outperforms both baselines which suggests it is effective on larger samples by producing clean decision boundaries. \\
\indent
There is a trade-off with decision boundaries and accuracy. It is possible to achieve even higher accuracy if we continue to modify $k$, however this may obscure the decision boundaries ( despite high accuracy scores), which could be related to bias and variance issues. 
We also computed the score without the cross validation set with the results presented in Table \ref{tab:accuracy_without_crossvalidation}.  
\begin{table}[t]
\centering
\caption{Accuracy on Squeezed kernel, RBF, and \textbf{KMNCS} on the make\_moon and make\_circles datasets with the following parameters: n\_samples=200, random\_state=50, y\_flip=0.2, factor=0.5, n\_informative=2, 5-fold cross validation. Noise effects are introduced in the data to increase difficulty of accurate classification.}
\begin{tabular}{llll}
\hline
              & Parameters & moons      & circles \\ \hline
Squeezed Kernel & Noise=0.1   &0.90       &0.95   \\
              & Noise=0.2   &0.85       &0.86   \\
              & Noise=0.3   &0.82       &0.80   \\
              & Noise=0.4   &0.79       &0.71   \\
             & Noise=0.5    &0.74       &0.68     \\
             & Noise=0.7    &0.62       &0.61     \\
              \hline
RBF           & Noise=0.1   &0.99       &1.0    \\
              & Noise=0.2   &0.95      &0.90   \\
              & Noise=0.3   &0.89       &0.80   \\
              & Noise=0.4   &0.82       &0.69   \\
              & Noise=0.5   &0.72      &0.64  \\
              & Noise=0.7   &0.65      &0.62   \\
              \hline
\textbf{KMNCS}         
& Noise=0.1,  k=-0.001 &0.97  &\textbf{1.0} \\
              & Noise=0.2, k=-0.001  &\textbf{0.95}  &0.88 \\
              & Noise=0.2,  k=-0.01   &\textbf{0.95}  &0.89 \\
              & Noise=0.2,  k=-0.1    &0.94  &0.88 \\
              & Noise=0.3,  k=-0.001    &\textbf{0.90}  &0.79 \\
              & Noise=0.3,  k=-0.01    &\textbf{0.90}  &0.79 \\
              & Noise=0.4,  k=-0.001    &\textbf{0.84}  &\textbf{0.72} \\
              & Noise=0.4, k=-0.01    &\textbf{0.84}  &\textbf{0.72} \\
              & Noise=0.5, k=-0.001  &\textbf{0.75} &\textbf{0.68}\\
              & Noise=0.5, k=-0.01   &\textbf{0.76} &\textbf{0.69}\\
              & Noise=0.5, k=-0.1    &\textbf{0.76} &\textbf{0.70}\\
              & Noise=0.7, k=-0.001  &0.64 &\textbf{0.62} \\
              & Noise=0.7, k=-0.01   &0.62 &\textbf{0.62}\\
              & Noise=0.7, k=-0.1    &\textbf{0.65} &\textbf{0.66} \\
              \hline
\end{tabular}
\label{tab:accuracy}
\end{table}

\begin{table}[t]
\centering
\caption{Accuracy on same parameters from Table \ref{tab:accuracy} without cross validation}
\begin{tabular}{llll}
\hline
              & Parameters & moons      & circles \\ \hline
Squeezed Kernel & Noise=0.2  &0.90       &0.94   \\
             & Noise=0.5    &0.75       &0.70     \\
             & Noise=0.7    &0.68       &0.56     \\
              \hline
RBF           & Noise=0.2   &0.96       &0.90    \\
              & Noise=0.5   &0.76      &0.64  \\
              & Noise=0.7   &0.70      &0.54   \\
              \hline
\textbf{KMNCS} & Noise=0.2, k=-0.001 &0.99  &0.93 \\
              & Noise=0.5, k=-0.001  &0.75  &0.66 \\
              & Noise=0.7, k=-0.001   &0.71  &0.54 \\
              \hline
\end{tabular}
\label{tab:accuracy_without_crossvalidation}
\end{table}

\begin{table}[t]
\centering
\caption{Accuracy on 3000 samples with same parameters from Table \ref{tab:accuracy} as well as 5-fold cross validation}
\begin{tabular}{llll}
\hline
              & Parameters & moons      & circles \\ \hline
Squeezed Kernel & Noise=0.5  &0.81       &0.65   \\
             & Noise=0.8    &0.72       &0.55     \\
             & Noise=1.0    &0.68       &0.52     \\
              \hline
RBF           & Noise=0.5   &0.81       &0.64    \\
              & Noise=0.8   &0.73      &0.54  \\
             & Noise=1.0    &0.68       &0.51     \\
              \hline
\textbf{KMNCS} & Noise=0.5, k=-0.1 &\textbf{0.82}  &\textbf{0.65} \\
              & Noise=0.5, k=-0.01 &\textbf{0.81}  & \textbf{0.65}\\
              & Noise=0.8, k=-0.1 &\textbf{0.73}  &\textbf{0.56} \\
                & Noise=1.0, k=-0.1 &\textbf{0.69}  &\textbf{0.53} \\
              \hline
\end{tabular}
\label{tab:accuracy_with_3000samples}
\end{table}
\section{Discussion}
\label{sec7}
With respect to the general behaviour of  KMNCS, it can be said that the kernel is conducive to forming large decision boundaries, which is a major point of contrast when considering the RBF kernel,  as evident in Figure \ref{fig:kernel_visualization},\ref{fig:kernel_visualization1}. As previously mentioned, the hyper-parameter $C$ can be used to 
optimise an SVM classifier, as a cost function associated with mis-classification of elements in feature spaces of the ’training’ set. It  implies  the  maximisation  of  $C$  tightens  decision  boundaries  (so called ’hard’  margins),  and  was  introduced  by Boser {\it et al.} \cite{boser1992training}. In later work, 'hard' margins were found to fail on even slightly inseparable data. As a solution, \cite{schiilkop1995extracting} introduced a meaningful technique for the minimisation of $C$ that was found to enlarge the space covered by decision boundaries ('soft' margins). While this has largely benefited the field of SVM in contention with other classification techniques, the trade-off is that significantly 'soft' margins fail to classify data entirely - formally known as 
over-generalisation, large decision boundaries (as produced by the Squeezed kernel) become non-representative of the data. As a result, the objective is to minimise the hyper-parameter $C$, while maintaining the highest possible classification. Relating this back to the findings of this paper, the topological structure of the KMNCS is reminiscent of Ref. \cite{boser1992training}, surprisingly in cases of sparse data ( in this case the sparse $make\_moons$ dataset), without concern for $C$.
\section{Conclusions and Remarks}\label{seccr}
 In this paper,  we mapped  datasets into non-linear coherent states, as a non-linear feature space, constructed by a complex Hilbert space. We showed that the  RBF kernel is recovered when data is mapped to a complex Hilbert space represented by coherent states. Therefore, non-linear coherent states can be considered as natural generalized candidates for formalizing kernels. In addition, by considering the non-linear coherent states of a quantum oscillator with variable mass, we proposed a kernel function based on generalized hypergeometric functions. This idea suggests a method for obtaining a generalized kernel function which can be expressed by orthogonal polynomial functions on the one hand, and opens a new door for  using quantum formalism to specify quantum  algorithms in continuous variable quantum computing, on the other. In addition, we studied the geometrical properties of the surface in which the kernel lives.   We indicated that the feature space of a non-linear coherent state is a non-zero curved space,  despite the fact that the RBF kernel lives on feature space which is  flat. This method can potentially open a door for studying the impact of general curved space on the  machine learning methods more  generally, and the problem of classification more specifically.\\
\indent More generally, this research has demonstrated how quantum approaches to machine learning may prove beneficial. In practical usage, machine learning applications of quantum theory have begun involving developments of physical circuitry \cite{havlivcek2019supervised}. These contributions have begun to realise the quantum processing components required to build quantum computational devices designed solely for
feature classification. It is inspiring to think that eventually, SVM classification may ( with the assistance of quantum theory) be computed substantially faster than ever before.

\begin{acknowledgments}
This reserach has received funding from the European Union's Horizon 2020 research and innovation programme under the Marie Sklodowska-Curie grant agreement No 721321. 
It was additionally supported by the
Asian Office of Aerospace Research and Development (AOARD) grant: FA2386-17-1-4016.
\end{acknowledgments}

\nocite{*}

\bibliography{apssamp}

\end{document}